\newtheorem{theorem}{Theorem}
\newtheorem{proposition}[theorem]{Proposition}
\newdefinition{definition}{Definition}
\newdefinition{example}{Example}
\newproof{proof}{Proof}
\newproof{proofsketch}{Proof sketch}
\newcommand{\Com}{\texttt{Com}}
\newcommand{\Skip}{\texttt{skip}}
\newcommand{\pop}{\texttt{pop}}
\newcommand{\push}{\texttt{push}}
\newcommand{\upds}{\texttt{upds}}
\newcommand{\frsp}{\texttt{frsp}}
\newcommand{\lat}{\texttt{lat}}
\newcommand{\snd}{\texttt{snd}}
\newcommand{\Top}{\texttt{top}}
\renewcommand{\L}{\mathtt{x}}
\newcommand{\R}{\mathtt{x}'}
\newcommand{\Ltop}{\mathtt{x}_{\Top}}
\newcommand{\tdollar}{\texttt{\$}}
\newcommand{\Tt}{\texttt{tt}}
\newcommand{\calP}{\mathcal{P}}
\newcommand{\calA}{\mathcal{A}}
\newcommand{\neXt}{\mathbin{\mathcal{X}}}
\newcommand{\Until}{\mathbin{\mathcal{U}}}
\newcommand{\N}{\mathbb{N}}
\newcommand{\ID}{\mathit{ID}}
\newcommand{\IDA}{\ID_{\!\calA}}
\newcommand{\Acc}{\mathit{Acc}}
\newcommand{\At}{\mathit{At}}
\newcommand{\com}{\mathit{com}}
\newcommand{\done}{\Rightarrow}
\newcommand{\COMP}{\circ}
\newcommand{\COMPT}{\circ_{\mathtt{T}}}
\newcommand{\COMPBL}{\odot}
\newcommand{\COMPBLT}{\odot_{\mathtt{T}}}
\newcommand{\EQj}[2]{({#1})^{=}_{#2}}
\def\putconfig(0,#1)#2#3#4{%
  \put(0,#1){\makebox(0,0){$#2$}}
  \put(30,#1){\makebox(0,0){$[#3]$}}
  \put(55,#1){\makebox(30,0)[r]{$#4$}}}
\def\putrule(0,#1)#2#3{%
  \put(100,#1){\makebox(0,18)[l]{$#2\,#3$}}}
\def\putruleP(0,#1)#2{%
  \putrule(0,#1){\Downarrow}{#2}}
\def\putruleA(0,#1)#2{%
  \putrule(0,#1){\!\rotatebox[origin=c]{-90}{$\vdash$}\,}{#2}}
\begin{document}

\begin{frontmatter}

\title{Reduction of
  Register Pushdown Systems with Freshness Property \\
  to Pushdown Systems
  in LTL Model Checking\tnoteref{t1}}

\author[1]{Yoshiaki Takata} 

\author[2]{Ryoma Senda}

\author[2]{Hiroyuki Seki}

\tnotetext[t1]{Funding:
  This work was partially supported by
  JSPS KAKENHI Grant Numbers JP19H04083 and JP20K20625.}

\address[1]{Kochi University of Technology,
  Tosayamada, Kami City, Kochi 782--8502, Japan}
\address[2]{Nagoya University,
  Furo-cho, Chikusa, Nagoya 464--8601, Japan}

\begin{abstract}
Pushdown systems (PDS) are known as an abstract model of
recursive programs, and
model checking methods for PDS have been studied.
Register PDS (RPDS) 
are PDS augmented by
registers to deal with data values from an infinite domain
in a restricted way.
A linear temporal logic (LTL) model checking method
for RPDS with regular valuations
has been proposed;
however,
the method requires the register automata (RA) used for
representing a regular valuation to be backward-deterministic.
This paper proposes another approach to the same problem,
in which
the model checking problem for RPDS is reduced
to that problem for PDS
by constructing a PDS
bisimulation equivalent
to a given RPDS\@.
The construction in the proposed method
is simpler than the previous model checking method
and does not require RAs deterministic or backward-deterministic,
and
the bisimulation equivalence
clearly guarantees the correctness of this reduction.
On the other hand,
the proposed method
requires every RPDS (and RA) to have
the freshness property,
in which whenever the RPDS updates a register with a data value
not stored in any register or the stack top,
the value should be fresh.
This paper also shows that this model checking
problem with regular valuations
defined by general RA is undecidable, and
thus the freshness constraint is essential in the proposed method.
\end{abstract}

\begin{keyword}
model checking\sep
register pushdown system\sep
regular valuation
\end{keyword}

\end{frontmatter}


\section{Introduction}

A pushdown system (PDS) is a pushdown automaton without input
and is well known as an abstract model of
recursive programs~\cite{ABE18,BEM97}.
The model checking problem for a PDS $\calP$, i.e.\
testing whether all runs of $\calP$ conforms to a given specification
$\varphi$,
has been studied for various logics such as
linear temporal logic (LTL) and branching-time temporal
logic~\cite{ABE18,BEM97,Wal96}.
In~\cite{EKS03},
the LTL model checking of PDS with
regular valuations was investigated and
shown to be decidable.
A valuation is a function that labels
each configuration in a run with
a subset of atomic propositions.
A regular valuation is a valuation that
labels each configuration $c$ of a PDS
with atomic propositions
depending on whether
the stack contents in $c$ match a given regular pattern.
%

Although PDS is a natural model of recursive programs,
it cannot deal with data values directly.
Register automata (RA) were introduced as an extension of
finite automata (FA) by adding the capability of
dealing with data values in a restricted way~\cite{KF94}.
RA has attracted attention as a formal model of navigational queries
concerning data values to structured data
such as XML documents~\cite{LV12}.
RA has also been considered as
a formal model of software systems with
unbounded resources
in e.g.\ runtime verification~\cite{GDPT13}
and reactive synthesis~\cite{EFR21}.
Similarly to extending FA to RA, PDS were extended to
pushdown register systems (PDRS)~\cite{MRT17} and
register pushdown systems (RPDS)~\cite{STS21},
both of them are equivalent each other,
and the reachability problem for them has been shown
to be EXPTIME-complete.

In previous work~\cite{STS21-IEICE-dec},
we have investigated the LTL model checking problem
with regular valuations for RPDS
and proposed
a method for solving the problem.
This method is a natural extension of the model checking method
for PDS in~\cite{EKS03}.
Similarly to
the method in~\cite{EKS03} that reduces the model checking problem
to the emptiness problem for B\"uchi pushdown systems,
in \cite{STS21-IEICE-dec} we introduced
B\"uchi register pushdown systems (BRPDS) and showed
a reduction of the model checking problem
to the emptiness problem for BRPDS\@.
Note that the regular valuations in~\cite{STS21-IEICE-dec} were
defined in terms of
backward-deterministic RA,
in which
every configuration
has a unique predecessor configuration for each
input data value.
This constraint is essential because
unlike FA, determinization is not possible for
general RA.

In this paper,
we work on the same problem
in a different approach.
We reduce the model checking problem for RPDS
to that problem for PDS
by constructing a PDS $\calP'$
bisimulation equivalent~\cite{CGKPV18-Bisim}
to a given RPDS $\calP$.
In the same way,
we also construct an FA $\calA'$ bisimulation equivalent
to an RA $\calA$ used for a regular valuation.
The bisimulation equivalences between
$\calP$ and $\calP'$ and between $\calA$ and $\calA'$
guarantee the correctness of this reduction.
%
%
The construction of a PDS bisimulation equivalent to an RPDS
is basically the same as
the one used in our recent work~\cite{STS21-ICTAC}
on reactive synthesis from specifications
given by deterministic register pushdown automata (DRPDA\@).
In~\cite{STS21-ICTAC}, we construct a pushdown automaton (PDA)
simulating a DRPDA for reducing
the realizability problem for DRPDA to
the same problem for PDA\@.
The proposed method in this paper is an application
of this construction to the model checking problem,
where not only RPDS but also RA used for a regular valuation
can be reduced to models without data values
in a uniform way,
and the correctness of the reduction of the model checking
problem can be proved easily based on bisimulation equivalence.

Another feature of the proposed method is that
it does not require RA used for regular valuations
to be deterministic or backward-deterministic.
However,
the proposed method requires
every RPDS (and RA) to have
the freshness property instead,
in which whenever the RPDS updates a register with a data value
not stored in any register or the stack top,
the value should be fresh, i.e.~not used before.
This paper also shows that the LTL model checking
problem for RPDS with regular valuations
defined by general RA is undecidable
(Theorem~\ref{th:undecidable}), and
this fact compels
the proposed method to require
the freshness property
instead of requiring RA
to be deterministic or backward-deterministic.

Advantages of the proposed method are summarized as follows:
(i) The construction in the proposed method
is much simpler 
than the method
in~\cite{STS21-IEICE-dec}.
The bisimulation equivalence between an RPDS and
a PDS clearly
shows the correctness of the reduction.
We have proved this bisimulation equivalence
using a proof assistant software Coq~\cite{BC04}.
(ii) By the proposed method,
we can apply existing model checking tools for PDS
to solving the model checking problem for RPDS\@.
(iii) The method does not require RA used for regular valuations
to be deterministic or backward-deterministic,
though it requires the freshness property instead.
Showing the undecidability of the model checking problem
in the general case
is another contribution of the paper.

The rest of the paper is organized as follows.
We define basic notions in Section~\ref{sec:preliminaries},
RPDS and RA in Section~\ref{sec:rpds}, and
the LTL model checking problem for RPDS with regular valuations
in Section~\ref{sec:modelcheck}.
In Section~\ref{sec:modelcheck},
we also show that
this problem is undecidable in general.
In Section~\ref{sec:freshness},
we introduce the freshness property and
redefine the semantics of RPDS so that
every RPDS has this property.
In Section~\ref{sec:rpds-pds},
we show the construction of a PDS bisimulation equivalent to
a given RPDS,
which is the main part of the proposed method.
We conclude the paper
in Section~\ref{sec:conclusion}.

\section{Preliminaries}
\label{sec:preliminaries}

Let $\N=\{1,2,\ldots\}$, $\N_0=\{0\}\cup\N$,
and $[n]=\{1,2,\ldots,n\}$ for $n\in\N$.
We assume a countable set $D$ of \emph{data values}.
For a given $k\in\N_0$,
a mapping $\theta:[k]\to D$ is called an \emph{assignment}
(of data values to $k$ registers).
Let $\Theta_k$ be the set of assignments to $k$ registers.
Sometimes we consider an assignment $\theta\in\Theta_k$
as the set of assigned data values;
e.g., $d\in\theta$ means $d=\theta(i)$ for some $i\in[k]$.

For a set $A$,
let $A^*$ and $A^{\omega}$ be the sets of
finite and infinite words over $A$, respectively.
Let $A^{\infty}=A^*\cup A^{\omega}$.
For a word
$\alpha\in A^{\infty}$,
let $\alpha(i)\in A$ be the $i$-th element of $\alpha$ 
and $\alpha(i\,{:})=\alpha(i)\alpha(i+1)\ldots$ for $i\ge0$.
Let $\snd$ be the function over pairs that gives the second element
of a pair;
i.e., $\snd((a,b))=b$. 
For a word $w=(a_0,b_0)(a_1,b_1)\ldots$ over pairs,
let $\snd(w)=b_0b_1\ldots$\,.

For a relation $\done$,
let $\done^*$ be the reflexive transitive closure of~$\done$.

\subsection{Linear Temporal Logic (LTL)}

The definition of LTL formulas we used is the same
as~\cite{EKS03}.
Let $\At$ be a finite set of \emph{atomic propositions},
and let $\Sigma=2^{\At}$.
An \emph{LTL formula} over $\At$ is given by the following syntax:
\begin{equation*}
  \varphi ::= \Tt \mid A \mid \neg\varphi \mid \varphi_1\land\varphi_2
  \mid \neXt\varphi \mid \varphi_1\Until\varphi_2
\end{equation*}
where $A\in\At$.
For an infinite word $w\in\Sigma^{\omega}$,
the satisfaction relation $\models$ is defined as follows:
\begin{alignat*}{2}
  w &\models\Tt, \\
  w &\models A &\iff\ & A\in w(0), \\
  w &\models\neg\varphi &\iff\ & w\not\models\varphi, \\
  w &\models\varphi_1\land\varphi_2
    &\iff\ & w\models\varphi_1 \text{ and } w\models\varphi_2, \\
  w &\models\neXt\varphi &\iff\ & w(1\,{:})\models\varphi, \\
  w &\models\varphi_1\Until\varphi_2 \
    &\iff\ &
    \exists j:(w(j\,{:})\models\varphi_2) \\
    &&& {}\land (\forall i<j:w(i\,{:})\models\varphi_1).
\end{alignat*}
We also define $\Diamond\varphi = \Tt\Until\varphi$ and
$\Box\varphi = \neg\Diamond(\neg\varphi)$.

\subsection{Pushdown systems and finite automata}

For a finite set $\Gamma$, we define the set
$\Com(\Gamma)$ of \emph{commands} over $\Gamma$ as
$\Com(\Gamma) = \{\pop,\Skip\}\cup
\{\push(\gamma)\mid \gamma\in\Gamma\}$.

\begin{definition}
  A \emph{pushdown system} (PDS)
  $\calP$ over a stack alphabet $\Gamma$
  is a pair $(P,\Delta)$, where
  $P$ is a finite set of \emph{states} and
  $\Delta\subseteq P\times\Gamma\times P\times\Com(\Gamma)$
  is a set of \emph{transition rules}.
  We write an element $(p,\gamma,q,\com)\in\Delta$ as
  $(p,\gamma)\to(q,\com)$ for readability.

  Let $\ID_{\calP}=P\times\Gamma^*$ and call each element of
  $\ID_{\calP}$
  an \emph{instantaneous description} (ID) of $\calP$.
  The transition relation $\done_{\calP}$ of $\calP$
  is the smallest relation over $\ID_{\calP}$
  satisfying the following inference rule,
  where
  $\upds'(\gamma v,\com') = v$, $\gamma v$, or $\gamma'\gamma v$
  if $\com'=\pop$, $\Skip$, or $\push(\gamma')$, respectively.
\begin{equation*}
\begin{array}{l}
(p,\gamma)\to(q,\com')\in\Delta
\\ \hline
(p,\gamma v)\done_{\calP'} (q,\upds'(\gamma v,\com'))
\end{array}
\end{equation*}

  A \emph{run} of $\calP$ is a sequence $\rho\in\ID_{\calP}^{\omega}$
  such that $\rho(i)\done_{\calP}\rho(i+1)$ for $i\ge0$.
\end{definition}

We define nondeterministic finite automata as follows,
which is used
for representing a (regular) subset of IDs of some PDS\@.

\begin{definition}
  A \emph{nondeterministic finite automaton} (NFA)
  $\calA$ over an alphabet $\Gamma$ is a quadruple
  $(Q,I,F,\delta)$,
  where $(Q,\delta)$ is a PDS over $\Gamma$ where
  $\delta$ consists of pop rules only,
  $I\subseteq Q$ is a set of \emph{initial states}, and
  $F\subseteq Q$ is a set of \emph{final states}.
  We call $(Q,\delta)$ the \emph{base PDS} of $\calA$.
  The set $\IDA$ of IDs and the transition relation
  of $\calA$ are the same as those of its base PDS\@.
  We write the transition relation of $\calA$ as $\vdash_{\calA}$.
  The \emph{language} $L(\calA)$ of $\calA$ is a subset of $\IDA$
  defined as
  $L(\calA)=\{(p,w)\in I\times\Gamma^*\mid (p,w)\vdash_{\calA}^*
  (q,\varepsilon)$ for some $q\in F\}$.
\end{definition}

When we use an NFA $\calA$ for representing a subset of IDs
of a PDS $\calP=(P,\Delta)$,
we let the set of the initial states of $\calA$ be~$P$.

\begin{definition}
For a PDS $\calP=(P,\Delta)$,
we call a subset $C\subseteq\ID_{\calP}$ \emph{regular}
if there exists an NFA $\calA=(Q,P,F,\delta)$ that satisfies
$C=L(\calA)$.
\end{definition}

\subsection{Equivalence relations over registers}

Let $\Phi_{k}$ be the set of \emph{equivalence relations} over
the set of $2k+1$ symbols
$X_k = \{ \L_1,\ldots,\L_k,\allowbreak\R_1,\ldots,\R_k,
\allowbreak\Ltop \}$.
We write $a\equiv_{\phi}b$ and $a\not\equiv_{\phi}b$ to mean
$(a,b)\in\phi$ and $(a,b)\notin\phi$, respectively,
for $a,b\in X_k$ and $\phi\in\Phi_k$.
Intuitively,
each $\phi\in\Phi_k$ represents the equality and inequality
among the data values in the registers and the stack top,
as well as the transfer of the values in the registers
between two assignments.
Two assignments $\theta,\theta'\in\Theta_k$
and a value $d$ at the stack top satisfy $\phi$,
denoted as $(\theta,d,\theta')\models\phi$,
iff for $i,j\in[k]$,
\begin{alignat*}{2}
 \L_i\equiv_{\phi}\L_j &\Leftrightarrow \theta(i)=\theta(j),
 & \L_i\equiv_{\phi}\Ltop &\Leftrightarrow \theta(i)=d, \\
 \L_i\equiv_{\phi}\R_j &\Leftrightarrow \theta(i)=\theta'(j),\quad
 & \R_j\equiv_{\phi}\Ltop &\Leftrightarrow \theta'(j)=d , \\
 \R_i\equiv_{\phi}\R_j &\Leftrightarrow \theta'(i)=\theta'(j).
\end{alignat*}
We will use elements of $\Phi_k$ to specify
transition rules of a register pushdown system with $k$ registers
($k$-RPDS), defined in the next section.

Let $\Phi'_k$ be the set of equivalence relations over
the $k$ symbols $\{ \L_1,\ldots,\L_k \}$.
An assignment $\theta\in\Theta_k$ satisfies $\phi'\in\Phi'_k$,
denoted as $\theta\models\phi'$,
iff for $i,j\in[k]$,
$\L_i\equiv_{\phi'}\L_j \iff \theta(i)=\theta(j)$.
Elements of $\Phi'_k$ will be used to specify
accepting conditions of a register automaton with $k$ registers
($k$-RA\@).

Let $\lat:\Phi_k\to\Phi'_k$ be the function defined as:
$\forall i,j\in[k]:
\L_i\equiv_{\lat(\phi)}\L_j$ iff $\R_i\equiv_{\phi}\R_j$.

\section{Register pushdown systems and register automata}
\label{sec:rpds}

\begin{definition}
  A \emph{register pushdown system with $k$ registers}
  ($k$-RPDS) $\calP$ is a pair $(P,\Delta)$, where
  $P$ is a finite set of \emph{states} and
  $\Delta\subseteq P\times\Phi_{k}\times P\times\Com([k])$
  is a set of \emph{transition rules}.
  We write an element $(p,\phi,q,\com)\in\Delta$ as
  $(p,\phi)\to(q,\com)$ for readability.

  Let $\ID_{\calP}=P\times\Theta_k\times D^*$ and
  call each element of $\ID_{\calP}$
  an \emph{instantaneous description} (ID) of $\calP$.
  The transition relation $\done_{\calP}$ of $\calP$
  is the smallest relation over $\ID_{\calP}$
  satisfying the following inference rule,
  where
  $\upds(du,\theta',\com) = u$, $du$, or $\theta'(j)du$
  if $\com=\pop$, $\Skip$, or $\push(j)$, respectively.
\begin{equation*}
\begin{array}{l}
  (p,\phi)\to(q,\com)\in\Delta \quad
  (\theta,d,\theta')\models\phi
  \\ \hline
  (p,\theta,du)\done_{\calP} (q,\theta',\upds(du,\theta',\com))
\end{array}
\end{equation*}

  A \emph{run} of $\calP$ is a sequence $\rho\in\ID_{\calP}^{\omega}$
  such that $\rho(i)\done_{\calP}\rho(i+1)$ for $i\ge0$.
\end{definition}

\begin{example}
  \label{example:RPDS}%
  Let us consider 2-RPDS $\calP=(
  \{p_0, p_1,\allowbreak p_2\}$, $\{r_1,r_2,r_3,r_4,r_5\})$
  where
  \begin{align*}
    r_1&=(p_0,\phi_0)\to(p_1,\push(1)), \\
    r_2&=(p_1,\phi_1)\to(p_1,\push(1)), \\
    r_3&=(p_1,\phi_1)\to(p_1,\pop), \\
    r_4&=(p_1,\phi_2)\to(p_1,\pop), \\
    r_5&=(p_1,\phi_3)\to(p_2,\push(2)),
  \end{align*}
  and $\phi_0,\ldots,\phi_3\in\Phi_2$ are defined by the
  following quotient sets:
  \begin{align*}
    X_2/{\phi_0} &= \{ \{\L_1\}, \{\L_2,\R_2,\Ltop\}, \{\R_1\} \}, \\
    X_2/{\phi_1} &= \{ \{\L_1,\Ltop\}, \{\L_2,\R_2\}, \{\R_1\} \}, \\
    X_2/{\phi_2} &= \{ \{\L_1\}, \{\L_2,\R_2\}, \{\R_1,\Ltop\} \}, \\
    X_2/{\phi_3} &= \{ \{\L_1,\R_1\}, \{\L_2,\Ltop\}, \{\R_2\} \}.
  \end{align*}
  In this example, we let $[d_1,d_2]$ for $d_1,d_2\in D$ denote
  the assignment $\theta\in\Theta_2$ such that
  $\theta(1)=d_1$ and $\theta(2)=d_2$.
  Let $d_0,d_1,\ldots\in D$ represent distinct data values.
  Figure~\ref{fig:RPDS} shows a transition sequence of $\calP$
  starting from an ID $(p_0,[d_1,d_0],d_0)$.
  For example,
  we can apply $r_1$ to this starting ID and
  obtain $(p_1,[d_2,d_0],d_2d_0)$,
  because $\phi_0$ requires that
  the value of the second register
  before the transition is the same as the stack top,
  the second register is not changed by the transition,
  and the value of the first register
  after the transition is not equal to the value of
  any register before the transition.
\end{example}
\begin{figure}
  \centering
  \begin{picture}(155,120)
    \put(0,115){\makebox(0,0){\footnotesize state}}
    \put(30,115){\makebox(0,0){\footnotesize registers}}
    \put(55,115){\makebox(30,0){\footnotesize stack}}

    \putconfig(0,100){p_0}{d_1,d_0}{d_0}
    \putconfig(0,80){p_1}{d_2,d_0}{d_2\,d_0}
    \putconfig(0,60){p_1}{d_3,d_0}{d_3\,d_2\,d_0}
    \putconfig(0,40){p_1}{d_4,d_0}{d_2\,d_0}
    \putconfig(0,20){p_1}{d_2,d_0}{d_0}
    \putconfig(0,0){p_0}{d_2,d_5}{d_5\,d_0}

    \putruleP(0,80){\phi_0,\push(1)}
    \putruleP(0,60){\phi_1,\push(1)}
    \putruleP(0,40){\phi_1,\pop}
    \putruleP(0,20){\phi_2,\pop}
    \putruleP(0,0){\phi_3,\push(2)}
  \end{picture}
  \caption{A transition sequence of $\calP$ from
  $(p_0,[d_1,d_0],d_0)$.}
  \label{fig:RPDS}
\end{figure}

\begin{definition}
  A \emph{register automaton with $k$ registers} ($k$-RA)
  $\calA$ is a quadruple $(Q,I,\xi,\delta)$, where
  $(Q,\delta)$ is a $k$-RPDS
    where $\delta$ consists of pop rules only,
  $I\subseteq Q$ is a set of \emph{initial states}, and
  $\xi\subseteq Q\times\Phi'_k$ is a set of \emph{accepting conditions}.
  We call $(Q,\delta)$ the \emph{base RPDS} of $\calA$.
  The set $\IDA$ of IDs and the transition relation
  of $\calA$ are the same as those of its base RPDS\@.
  We write the transition relation of $\calA$ as $\vdash_{\calA}$.
  Let
  $\Acc_{\calA}=\{(p,\theta,\varepsilon)\in\IDA \mid$
  $\theta\models\psi$ for some $(p,\psi)\in\xi\}$.
  The \emph{language} $L(\calA)$ of $\calA$ is a subset of $\IDA$
  defined as
  $L(\calA)=\{(p,\theta,w)\in\IDA\mid
    p\in I$ and $(p,\theta,w)\vdash_{\calA}^* (q,\theta',\varepsilon)$
    for some
    $(q,\theta',\varepsilon)\in\Acc_{\calA}\}$.
\end{definition}

We write a transition rule $(q_1,\phi)\to(q_2,\pop)$ of an RA as
$(q_1,\phi)\to q_2$ for readability.

\begin{definition}
For a $k$-RPDS $\calP=(P,\Delta)$,
we call a subset $C\subseteq\ID_{\calP}$ \emph{regular}
if there exists a $k$-RA $\calA=(Q,P,\xi,\delta)$ that satisfies
$C=L(\calA)$.
\end{definition}

\begin{example}
  Let us consider 2-RA $\calA=(
  \{p_1, q_1, \allowbreak q_2\}$,
  $\{p_1\}$, $\{(q_2,\psi)\}$,
  $\{r_6, \allowbreak r_7, r_8\})$
  where
  \begin{alignat*}{2}
    r_6&=(p_1,\phi_1)\to q_1, \quad\ &
    r_8&=(q_1,\phi_3)\to q_2, \\
    r_7&=(q_1,\phi_4)\to q_1,
  \end{alignat*}
  $\phi_1$ and $\phi_3$ are the same as Example~\ref{example:RPDS},
  $\phi_4\in\Phi_2$ is defined by the quotient set
  \begin{align*}
    X_2/{\phi_4} &= \{ \{\L_1,\R_1\}, \{\L_2,\R_2\}, \{\Ltop\} \},
  \end{align*}
  and $\psi\in\Phi'_2$ is the equivalence relation such that
  $\L_1\not\equiv_{\psi}\L_2$.
  Figure~\ref{fig:RA} shows a transition sequence of $\calA$
  starting from an ID $(p_1,[d_3,d_0],d_3d_2d_0)$.
  As shown in the figure,
  $(p_1,[d_3,d_0],d_3d_2d_0)\vdash_{\!\calA}^*
  (q_2,[d_4,d_5],\varepsilon)$.
  Since $[d_4,d_5]\models\psi$,
  $(p_1,[d_3,d_0],\allowbreak d_3d_2d_0)\in L(\calA)$.
\end{example}
\begin{figure}
  \centering
  \begin{picture}(120,80)
    \put(0,75){\makebox(0,0){\footnotesize state}}
    \put(30,75){\makebox(0,0){\footnotesize registers}}
    \put(55,75){\makebox(30,0){\footnotesize stack}}

    \putconfig(0,60){p_1}{d_3,d_0}{d_3\,d_2\,d_0}
    \putconfig(0,40){q_1}{d_4,d_0}{d_2\,d_0}
    \putconfig(0,20){q_1}{d_4,d_0}{d_0}
    \putconfig(0,0){q_2}{d_4,d_5}{\varepsilon}

    \putruleA(0,40){\phi_1}
    \putruleA(0,20){\phi_4}
    \putruleA(0,0){\phi_3}
  \end{picture}
  \caption{A transition sequence of $\calA$ starting from an ID
  $(p_1,[d_3,d_0],d_3d_2d_0)$.}
  \label{fig:RA}
\end{figure}

\section{LTL model checking problem and valuations}
\label{sec:modelcheck}

We fix a finite set $\At$ of atomic propositions,
and let $\Sigma=2^{\At}$.

A \emph{valuation} for a $k$-RPDS $\calP=(P,\Delta)$ is
a function
$\Lambda:\ID_{\calP}\to\Sigma$,
which labels each ID of $\calP$ with
a subset of atomic propositions.
We extend the domain of $\Lambda$ to
$(\ID_{\calP})^{\infty}$ in the usual way;
i.e.,
$\Lambda(c_0c_1\ldots)=\Lambda(c_0)\Lambda(c_1)\ldots$\,.

\begin{definition}
The \emph{LTL model checking problem} for RPDS is defined
as:
\begin{description}
\item[Instance:]
  a $k$-RPDS $\calP=(P,\Delta)$,
  an LTL formula $\varphi$ over $\At$,
  a valuation $\Lambda:\ID_{\calP}\to\Sigma$,
  and an ID $c_0\in\ID_{\calP}$.
\item[Question:]
  Does every run
  $\rho\in(\ID_{\calP})^{\omega}$ of $\calP$
  with $\rho(0)=c_0$ satisfy
  $\Lambda(\rho)\models\varphi$\,?
\end{description}
\end{definition}

In the rest of the paper,
we fix a $k$-RPDS $\calP=(P,\Delta)$,
a valuation $\Lambda:\ID_{\calP}\to\Sigma$,
and
a starting ID~$c_0$.

\begin{definition}
We call $\Lambda:\ID_{\calP}\to\Sigma$ a \emph{regular valuation}
if the set $\{c\in\ID_{\calP}\mid A\in\Lambda(c)\}$ is
regular for every $A\in\At$.
\end{definition}

We assume that $\Lambda$ is a regular valuation
and
a $k$-RA $\calA_A=(Q_A,P,\xi_A,\delta_A)$ for each $A\in\At$
satisfying
$L(\calA_A)=\{c\in\ID_{\calP}\mid A\in\Lambda(c)\}$
is given.

\medskip

We also define the model checking problem and regular valuations
for PDS in the same way.
It is known that
the LTL model checking problem with regular valuations for PDS
is decidable~\cite[Theorem 3]{EKS03}.
The main objective of this paper is to show
a reduction of that problem
for RPDS
to the one for PDS\@.


However,
that problem for RPDS is undecidable in general,
as shown below.
\begin{theorem}\label{th:undecidable}
The LTL model checking problem with regular valuations for RPDS
is undecidable.
\end{theorem}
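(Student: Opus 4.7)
The plan is to reduce from the halting problem of a deterministic Turing machine $M$, which is classically undecidable. The reduction exploits two capabilities that reappear once the freshness constraint is dropped: the RPDS may nondeterministically set a register to a data value currently outside its registers (so it can ``recall'' values it pushed earlier onto the stack), and the RA used in a regular valuation may likewise guess old data values and compare them across arbitrary stack positions. Together, these capabilities let us encode candidate computations of $M$ on the stack of $\calP$ and check their consistency through the valuation.

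Given $M$, I would construct an RPDS $\calP$ that nondeterministically produces, on its stack, a sequence of encoded $M$-configurations $c_0\mid c_1\mid\cdots\mid c_t$ with $c_t$ on top. Each cell of a configuration is a pair (cell-id, symbol) pushed as two stack entries; the cell-id is a data value in $D$, and symbols plus markers are encoded through register-equality constraints with help from the finite control. Under the general (non-freshness) semantics, $\calP$'s transitions may reuse cell-ids across configurations: when pushing position $p$ of $c_{i+1}$, the register holding that position's id may be nondeterministically set to equal the (no-longer-visible) id used for position $p$ in $c_i$. In this way $\calP$ can generate \emph{every} candidate trace beginning from $M$'s initial configuration; some traces happen to be consistent with $M$'s transition relation, others are not. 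A distinguished state $q_{\mathit{done}}$ can be entered at any time to signal ``a completed candidate trace sits on the stack.''

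The regular valuations use three atomic propositions. The proposition $\mathit{done}$ is true exactly in IDs whose state is $q_{\mathit{done}}$. The proposition $\mathit{bad}$ is defined by a non-deterministic RA $\calA_{\mathit{bad}}$ that scans the stack top-down via pop transitions and nondeterministically guesses a defect: either (i)~a cell in $c_{i+1}$ whose cell-id has no match in $c_i$, (ii)~a pair of matching cells whose symbols violate $M$'s transition, or (iii)~a structural malformation of the encoding. Each defect is witnessed by storing one or two cell-ids in registers while searching for a matching or conflicting cell further down the stack. The proposition $\mathit{acc}$ is defined similarly by an RA recognizing stacks whose topmost configuration encodes $M$'s halting state. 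The LTL formula is $\varphi\mathrel{:=}\Box\neg\bigl(\mathit{done}\wedge\mathit{acc}\wedge\neg\mathit{bad}\bigr)$.

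If $M$ halts, $\calP$ has a run that constructs the unique consistent halting trace and enters $q_{\mathit{done}}$; at that ID, $\mathit{done}$ and $\mathit{acc}$ hold and $\mathit{bad}$ fails, so $\varphi$ is violated on that run and model checking returns ``no''. If $M$ does not halt, no consistent halting trace exists, so whenever $\mathit{done}\wedge\mathit{acc}$ holds the stack must contain some defect, forcing $\mathit{bad}$; thus $\varphi$ is satisfied on every run and model checking returns ``yes''. Hence the model checker decides non-halting of $M$, which is impossible. The principal obstacle is the careful design of $\calA_{\mathit{bad}}$: each defect must be witnessable by a non-deterministic RA with a fixed register budget in a single top-down pass, which uses the RA's freedom to pre-guess both the configuration index and the cell position of the defect and then retain only one or two cell-ids in registers at a time. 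That very freedom disappears if the RA is required to be deterministic or backward-deterministic, or under the freshness property, which is precisely why those restrictions restore decidability in the rest of the paper.
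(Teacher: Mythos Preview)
Your reduction from Turing-machine halting is in principle workable, but it is a much heavier route than the paper's. The paper obtains undecidability by a one-line reduction from the \emph{universality problem for register automata}, already known to be undecidable (Neven--Schwentick--Vianu). Given a $k$-RA $\calA$ with designated $(q_0,\theta_0)$, the paper builds a trivial $(k{+}1)$-RPDS that keeps the first $k$ registers fixed, repeatedly loads an arbitrary value into register $k{+}1$ and pushes it, and never changes state; the single atomic proposition $A$ is evaluated by (a mild variant of) $\calA$ itself, and the formula is $\Box A$. Then ``every run satisfies $\Box A$'' is literally the statement that $\calA$ accepts every word, so the model-checking answer coincides with the universality answer. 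What you are doing---encoding TM histories on the stack and having a nondeterministic RA guess and verify a local defect---is essentially the \emph{content} of the Neven--Schwentick--Vianu proof unfolded inside the model-checking instance. That buys self-containment but at the cost of delicate encoding work that the paper sidesteps by quoting the universality result as a black box.

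If you keep your route, note that your sketch of $\calA_{\mathit{bad}}$ needs more defect types than the three you list: you must also detect duplicate cell-ids within a single configuration, ids present in $c_i$ but missing from $c_{i+1}$, and order inversions of ids between $c_i$ and $c_{i+1}$; and defect~(ii) really compares a size-three window in $c_i$ against one cell in $c_{i+1}$, not a single ``pair of matching cells''. All of this is doable with a bounded register budget, but getting it right is precisely the nontrivial part of the cited universality proof, which is why the paper prefers to invoke that result rather than reprove it.
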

\begin{proof}
The universality problem for RA stated as follows
is known to be undecidable~\cite[Theorem 5.1]{NSV04}:
\emph{Instance}: a $k$-RA $\calA$, an initial state $q_0$,
and an initial assignment $\theta_0$.
\emph{Question}: Does $(q_0,\theta_0,w)\in L(\calA)$
for every $w\in D^*$?
We can reduce this problem to the model checking problem with
regular valuations for RPDS\@;
for given $\calA$, $q_0$, and $\theta_0$,
we can construct a $(k+1)$-RPDS $\calP$ and
a $(k+1)$-RA $\calA'$ that satisfy the following:
Let $\tdollar\in D$ be an arbitrary data value,
which is used as the stack bottom.
$\calP$ has $q_0$ as its only state.
$\calP$ does not alter
the first $k$ registers, and in every state transition,
it loads an arbitrary data value 
to the $(k+1)$-th register and pushes it into the stack.
Therefore, from any starting ID $(q_0,\theta,\tdollar)$,
$\calP$ can reach
an ID $(q_0,\theta',w\tdollar)$ for every $w\in D^*$,
where $\theta'(i)=\theta(i)$ for $i\in[k]$.
$\calA'$ is a modified version of $\calA$ that
does not use the $(k+1)$-th register and
satisfies
for any $\theta'_0$ with
$\theta'_0(i)=\theta_0(i)$ for $i\in[k]$,
$(q_0,\theta_0,w)\in L(\calA)$ iff
$(q_0,\theta'_0,w\tdollar)\in L(\calA')$.
Let $\At=\{A\}$ and $\Lambda$ be the regular valuation such that
$\Lambda(c)=\{A\}$ if $c\in L(\calA')$ and
$\Lambda(c)=\emptyset$ otherwise.
Let $c_0=(q_0,\theta'_0,\tdollar)$ for some $\theta'_0$ with
$\theta'_0(i)=\theta_0(i)$ for $i\in[k]$.
Let $\varphi=\Box A$.
Then, the answer to the model checking problem on
$\calP$, $\Lambda$, $\varphi$, and $c_0$ coincides with
the answer to the universality problem
on $\calA$, $q_0$, and $\theta_0$.
\qed
\end{proof}

\section{Freshness property}
\label{sec:freshness}

The method proposed in this paper requires that
the transition relation $\done_{\calP}$ of every $k$-RPDS
(including $k$-RA)
$\calP$
should have
the \emph{freshness property} stated as follows:
If $(p,\theta,du)\done_{\calP} (q,\theta',u')$
by a rule $r$
and the updated assignment $\theta'$ contains
a data value $d'$ not in $\theta\cup\{d\}$,
then $d'$ must be ``fresh''; i.e.,
$d'$ should have never appeared in the computation
from a starting configuration to $(p,\theta,du)$.
Since the set $D$ of data values is infinite,
such a fresh data value $d'$ always exists
whenever the rule $r$ can be applied to $(p,\theta,du)$ above.

To define the freshness property formally,
we slightly modify the semantics of $k$-RPDS so that
each stack cell keeps
the assignment at the time when the cell is pushed into the stack.
The assignments ``saved'' in the stack are used only for
choosing a fresh data value and do not affect
the behavior of an RPDS in other ways.
We redefine the set $\ID_{\calP}$ of IDs of
a $k$-RPDS $\calP=(P,\Delta)$ as
\[\ID_{\calP}=P\times\Theta_k\times(D\times\Theta_k)^*\]
and the transition relation $\done_{\calP}$ as follows:
\begin{equation*}
\begin{array}{l}
  (p,\phi)\to(q,\com)\in\Delta \quad
  (\theta,d,\theta')\models\phi \\
  \frsp(\theta'; d,\theta; \theta''\snd(u))
  \\ \hline
  (p,\theta,(d,\theta'')u)\done_{\calP} (q,\theta',
  \upds((d,\theta'')u,\theta',\com))
\end{array}
\end{equation*}
where
$\upds((d,\theta'')u,\theta',\com) = u$, $(d,\theta'')u$,
or $(\theta'(j),\theta')(d,\theta'')u$
if $\com=\pop$, $\Skip$, or $\push(j)$, respectively,
and
$\frsp(\theta'; d,\theta; \theta_n\ldots\theta_1)$
is a predicate
that is true iff
for each $i\in[k]$,
$\theta'(i)\in\theta\cup\{d\}$ or
$\theta'(i)\notin\theta_1\cup\ldots\cup\theta_n$.
That is,
each value $\theta'(i)$ in the updated assignment
is either
the value $\theta(l)$ of some register,
the value $d$ at the stack top,
or a fresh value that has not appeared in $\theta_n\ldots\theta_1$.

We say an ID
$(p,\theta_n,(d_{n-1},
\theta_{n-1})\allowbreak\ldots(d_1,\theta_1))\in\ID_{\calP}$
is \emph{proper} if
for $\forall i,j,l\in [n]$ with $i < j \le l$,
$d_i,\theta_i,\theta_j,\theta_l$ satisfy
$d_i\in\theta_i$ and
$\bigl(
\forall m\in[k]:
\theta_i(m)\notin\theta_j$ implies $\theta_i(m)\notin\theta_l
\bigr)$ and $\bigl(
d_i\notin\theta_j$ implies $d_i\notin\theta_l
\bigr)$.
%
%
Under the assumption of the freshness property,
every ID reachable from a proper starting ID is also proper.
%
We assume the starting ID $c_0$ 
given to the model checking problem
is proper. 

Note that
RA with the freshness property is similar to
session automata (SA)~\cite{BHLM14}, which is a special case
of fresh-register automata (FRA)~\cite{Tze11}.
An SA has the same structure as RA
but requires an input data value
to be either a value stored in some register or
a value not used before.
On the other hand,
RA in this paper can update
registers with data values not given as input
but chosen arbitrary from values satisfying a guard condition,
and the freshness constraint is imposed only on values not given as input.

\section{PDS simulating RPDS}
\label{sec:rpds-pds}

\subsection{Bisimulation relation between an RPDS and a PDS}

The bisimulation equivalence~\cite{CGKPV18-Bisim} is a basic notion
to capture the equivalence of behaviors of
two state transition systems.
We review the definition of
the notion in this subsection.
We will show the construction of
a PDS bisimulation equivalent to a given RPDS
in Section~\ref{sec:RPDStoPDS}.

\begin{definition}
For an RPDS $\calP=(P,\Delta)$ and a PDS $\calP'=(P',\Delta')$,
we call a relation $R\subseteq \ID_{\calP}\times\ID_{\calP'}$
a \emph{bisimulation relation} between $\calP$ and $\calP'$
if $R$ satisfies the following:
\begin{enumerate}[label=(\arabic*)]
\item For every $c_1,c_2\in\ID_{\calP}$ and $c'_1\in\ID_{\calP'}$,
  if $c_1\done_{\calP} c_2$ and $(c_1,c'_1)\in R$,
  then $\exists c'_2\in\ID_{\calP'}: c'_1\done_{\calP'} c'_2$
  and $(c_2,c'_2)\in R$.
\item For every $c_1\in\ID_{\calP}$ and $c'_1,c'_2\in\ID_{\calP'}$,
  if $c'_1\done_{\calP'} c'_2$ and $(c_1,c'_1)\in R$,
  then $\exists c_2\in\ID_{\calP}: c_1\done_{\calP} c_2$
  and $(c_2,c'_2)\in R$.
\end{enumerate}
For an RA $\calA=(Q,I,\xi,\delta)$ and an NFA $\calA'=(Q',I',F,\delta')$,
we call a relation $R\subseteq \IDA\times\ID_{\!\calA'}$
a bisimulation relation between $\calA$ and $\calA'$
if $R$ is a bisimulation relation between
$(Q,\delta)$ and $(Q',\delta')$ and
also satisfies:
\begin{enumerate}[label=(\arabic*),start=3]
\item
  If $(c,c')\in R$,
  then $c\in\Acc_{\calA}$ iff $c'=(q,\varepsilon)$ and $q\in F$.
\end{enumerate}
An RPDS $\calP$ and a PDS $\calP'$
(or an RA $\calA$ and an NFA $\calA'$)
are \emph{bisimulation equivalent}
if there is a bisimulation relation between them.
\end{definition}

By definition, we obtain the following propositions.

\begin{proposition}
If there is a bisimulation relation $R$ between
an RA $\calA$ and an NFA $\calA'$,
then
for any pair $(c,c')\in R$,
$c\in L(\calA)$ iff $c'\in L(\calA')$.
\end{proposition}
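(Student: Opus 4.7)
The plan is to deduce the language equivalence from the three bisimulation conditions by way of a standard reachability-preservation lemma. First I would prove, by induction on $n\in\N_0$, that whenever $(c_1,c_1')\in R$: \emph{(a)} every $c_2$ with $c_1\vdash_{\calA}^{n}c_2$ has a counterpart $c_2'$ with $c_1'\vdash_{\calA'}^{n}c_2'$ and $(c_2,c_2')\in R$; and symmetrically \emph{(b)} every $c_2'$ with $c_1'\vdash_{\calA'}^{n}c_2'$ has a counterpart $c_2$ with $c_1\vdash_{\calA}^{n}c_2$ and $(c_2,c_2')\in R$. The base case $n=0$ is immediate. For the inductive step of (a), decompose $c_1\vdash_{\calA}^{n-1}c_m\vdash_{\calA}c_2$, apply the induction hypothesis to the prefix of length $n-1$ to obtain an intermediate $c_m'$ with $(c_m,c_m')\in R$, and then invoke bisimulation condition~(1) on $c_m\vdash_{\calA}c_2$ to produce the required $c_2'$. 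Part (b) is identical with condition~(2) in place of (1). Passing to unions over $n$ then gives the corresponding statements for $\vdash_{\calA}^{*}$ and $\vdash_{\calA'}^{*}$.

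Combining the lemma with condition~(3) finishes the argument. Suppose $(c,c')\in R$ and $c\in L(\calA)$, so that $c\vdash_{\calA}^{*}c_a$ for some $c_a\in\Acc_{\calA}$. The lemma yields $c_a'$ with $c'\vdash_{\calA'}^{*}c_a'$ and $(c_a,c_a')\in R$, and then condition~(3) forces $c_a'=(q,\varepsilon)$ with $q\in F$, so $c'\in L(\calA')$. The converse direction uses part (b) of the lemma together with the other implication in condition~(3).

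The only delicate point I anticipate is the initial-state constraint hidden in the definitions of $L(\calA)$ and $L(\calA')$: the bisimulation as stated relates IDs irrespective of whether their state components are initial, so strictly speaking the iff holds only up to this extra condition. I would handle this by noting that in the intended application in Section~\ref{sec:RPDStoPDS}, the NFA $\calA'$ inherits its state set (and initial-state set) from the RA $\calA$ via the explicit encoding of data-assignments into stack/state information, so that $R$-related IDs carry matching initial-ness. Once that convention is fixed, the induction above is routine and no substantive obstacle remains.
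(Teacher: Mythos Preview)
Your argument is correct and is exactly the standard unwinding of the bisimulation conditions; the paper itself gives no proof at all, merely stating ``By definition, we obtain the following propositions,'' so there is nothing substantive to compare against. Your observation about the initial-state constraint is a genuine subtlety the paper glosses over: as written, conditions (1)--(3) say nothing about initial states, and you are right that the proposition only goes through because in the concrete construction of Section~\ref{sec:RPDStoPDS} the relation $R$ pairs $(p,\theta,u)$ with $((p,\phi),v)$ while $I=P$ and $I'=P\times\Phi_k$, so initial-ness is preserved automatically.
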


\begin{proposition}
Let $\calP$ be an RPDS and $\calP'$ be a PDS\@.
Let $\Lambda:\ID_{\calP}\to\Sigma$ and
$\Lambda':\ID_{\calP'}\to\Sigma$ be their valuations
and $c_0\in\ID_{\calP}$ and $c'_0\in\ID_{\calP'}$ be
their IDs.
If there is a bisimulation relation $R$
between $\calP$ and $\calP'$ such that
$(c_0,c'_0)\in R$ and
$\Lambda(c)=\Lambda'(c')$
for every
pair $(c,c')\in R$, then:
\begin{enumerate}[label=(\roman*)]
\item For every run $\rho$ 
of $\calP$
with $\rho(0)=c_0$,
there exists a run $\rho'$ 
of $\calP'$
with $\rho'(0)=c'_0$ such that
$\Lambda(\rho)=\Lambda'(\rho')$.
\item
For every run $\rho'$ of $\calP'$
with $\rho'(0)=c'_0$,
there exists a run $\rho$ of $\calP$
with $\rho(0)=c_0$ such that
$\Lambda(\rho)=\Lambda'(\rho')$.
\end{enumerate}
\end{proposition}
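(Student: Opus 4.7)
The plan is to prove (i) by inductively constructing a matching run $\rho'$ of $\calP'$ step by step using bisimulation condition~(1); part (ii) is then symmetric, swapping the roles of $\calP$ and $\calP'$ and invoking condition~(2) instead. Once the related runs are in hand, the equality $\Lambda(\rho)=\Lambda'(\rho')$ is immediate from the hypothesis $\Lambda(c)=\Lambda'(c')$ for every $(c,c')\in R$, since equality of $\omega$-words over $\Sigma$ is pointwise.

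For (i), I would maintain the invariant that after $i$ matching steps I have built a finite prefix $\rho'(0),\ldots,\rho'(i)\in\ID_{\calP'}$ with $\rho'(j)\done_{\calP'}\rho'(j+1)$ for every $j<i$ and $(\rho(j),\rho'(j))\in R$ for every $j\le i$. The base case $i=0$ holds by the hypothesis $(c_0,c'_0)\in R$ together with $\rho(0)=c_0$ and the definition $\rho'(0):=c'_0$. For the inductive step, since $\rho$ is a run we have $\rho(i)\done_{\calP}\rho(i+1)$, and by the invariant $(\rho(i),\rho'(i))\in R$; condition~(1) of a bisimulation then supplies some $c'_{i+1}\in\ID_{\calP'}$ with $\rho'(i)\done_{\calP'}c'_{i+1}$ and $(\rho(i+1),c'_{i+1})\in R$. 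Setting $\rho'(i+1):=c'_{i+1}$ preserves the invariant.

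Because the inductive step always succeeds, the axiom of dependent choice yields a single infinite sequence $\rho'\in(\ID_{\calP'})^{\omega}$ that realises all the choices simultaneously. By construction, $\rho'$ is a run of $\calP'$ with $\rho'(0)=c'_0$ and $(\rho(j),\rho'(j))\in R$ for every $j\ge 0$. The labelling hypothesis applied pointwise then gives $\Lambda(\rho(j))=\Lambda'(\rho'(j))$ for every $j$, so $\Lambda(\rho)=\Lambda'(\rho')$ by the extension of $\Lambda,\Lambda'$ to infinite sequences. Part~(ii) follows by the identical argument with $\calP$ and $\calP'$ swapped and condition~(2) used in place of~(1).

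I do not expect any real obstacle here: the statement is the textbook consequence of bisimulation plus label-preservation, and the only mildly non-trivial point is the explicit appeal to dependent choice when gluing countably many one-step witnesses into a single infinite run. Everything else is bookkeeping.
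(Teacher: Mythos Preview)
Your argument is correct and is exactly the standard unfolding of the bisimulation definition that the paper has in mind; the paper itself does not spell out a proof at all, merely stating that the proposition follows ``by definition.'' Your inductive construction with the explicit appeal to dependent choice is therefore simply a careful elaboration of what the authors consider immediate.
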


%
%

\subsection{Composition of equivalence relations}
\label{sec:composition}

Let $\COMPBL$ and $\COMPBLT$ be the binary predicates over $\Phi_k$ defined as:
\begin{align*}
  \phi_1\COMPBL\phi_2 \ &:\Leftrightarrow \
      \bigl(\R_i \equiv_{\phi_1} \R_j \text{ iff } \L_i \equiv_{\phi_2} \L_j
      \ \text{for } i,j\in[k]\bigr).
  \\
  \phi_1\COMPBLT\phi_2 \ &:\Leftrightarrow \
      \bigl(\phi_1\COMPBL\phi_2 \ \text{and} \\
      &\quad\;
      (\R_i \equiv_{\phi_1} \Ltop \text{ iff } \L_i \equiv_{\phi_2} \Ltop
      \ \text{for } i\in[k])\bigr).
\end{align*}
Intuitively,
$\phi_1\COMPBL\phi_2$ represents the \emph{composability} of
$\phi_1$ and $\phi_2$;
for $\phi_1$ and $\phi_2$ satisfying
$(\theta_1,d_1,\theta_2)\models\phi_1$ and
$(\theta_2,d_2,\theta_3)\models\phi_2$
for some $\theta_1,d_1,\theta_2,d_2,\theta_3$,
we will define (after Example~\ref{example:composable})
the composition
$\phi_1\COMP\phi_2$ that satisfies
$(\theta_1,d_1,\theta_3)\models\phi_1\COMP\phi_2$
(under the assumption on the freshness property),
and
$\phi_1\COMPBL\phi_2$ represents
the condition
``$(\theta_1,d_1,\theta_2)\models\phi_1$ and
$(\theta_2,d_2,\theta_3)\models\phi_2$
for some $\theta_1,d_1,\theta_2,d_2,\theta_3$.''
Similarly,
$\phi_1\COMPBLT\phi_2$ represents the condition
``$(\theta_1,d,\theta_2)\models\phi_1$ and
$(\theta_2,d,\theta_3)\models\phi_2$
for some $\theta_1,d,\theta_2,\theta_3$.''
%

\begin{example}\label{example:composable}
  The equivalence relations $\phi_0$ and $\phi_1$ shown in
  Example~\ref{example:RPDS} satisfy
  $\phi_0\COMPBL\phi_1$
  because $\R_1\not\equiv_{\phi_0}\R_2$ and
  $\L_1\not\equiv_{\phi_1}\L_2$.
  However,
  $\phi_0\COMPBLT\phi_1$ does not hold
  because $\R_2\equiv_{\phi_0}\Ltop$ but
  $\L_2\not\equiv_{\phi_1}\Ltop$.
  For $\phi_2$ and $\phi_3$
  also shown in Example~\ref{example:RPDS},
  both
  $\phi_0\COMPBLT\phi_3$ and
  $\phi_1\COMPBLT\phi_2$ hold.
\end{example}

For $\phi_1,\phi_2\in\Phi_k$ with $\phi_1\COMPBL\phi_2$,
the \emph{composition} $\phi_1\COMP\phi_2$ of them
is the equivalence relation in $\Phi_k$ that satisfies the following:
\begin{alignat}{2}
  \L_i \equiv_{\phi_1\COMP\phi_2} \L_j :\Leftrightarrow{} &
  \L_i \equiv_{\phi_1} \L_j \nonumber\\
  &\text{for } i,j\in [k]\cup\{\Top\},\\
  \R_i \equiv_{\phi_1\COMP\phi_2} \R_j :\Leftrightarrow{} &
  \R_i \equiv_{\phi_2} \R_j
  \quad\text{for } i,j\in [k],\\
  \L_i \equiv_{\phi_1\COMP\phi_2} \R_j :\Leftrightarrow{} & 
  (\exists l\in [k] :{} \L_i \equiv_{\phi_1} \R_l
  \mathrel{\land}
  \L_l \equiv_{\phi_2} \R_j) \nonumber\\
  &\text{for } i\in[k]\cup\{\Top\},\ j\in [k].
  \label{eq:comp3}
\end{alignat}
%
By definition,
if an ID $(p,\theta_3,\allowbreak
(d_2,\theta_2)\allowbreak(d_1,\theta_1)u)$
is proper, 
$(\theta_1,d_1,\theta_2)\models\phi_1$ and
$(\theta_2,d_2,\theta_3)\models\phi_2$,
then
$(\theta_1\,d_1,\theta_3)\models\phi_1\COMP\phi_2$.
Guaranteeing this property is
the main purpose of the freshness property
and the properness of IDs:
If the properness of the above ID is not assumed,
and
$d_1\notin\theta_2$ and
$\theta_3(j)\notin\theta_2\cup\{d_2\}$ for some $j\in[k]$,
then
either
$d_1=\theta_3(j)$ or $d_1\ne\theta_3(j)$ holds
(and only the latter satisfies
$(\theta_1\,d_1,\theta_3)\models\phi_1\COMP\phi_2$).
This uncertainty prevents
a PDS from simulating an RPDS\@:
When a PDS $\calP'$ simulating an RPDS is popping off the stack top
in an ID corresponding to the above ID,
without assuming the properness of the ID,
$\calP'$ cannot know whether or not
the data value $d_1$ in the new stack top of the RPDS
is contained in the current register assignment~$\theta_3$.

Similarly to the composition $\circ$,
for $\phi_1,\phi_2\in\Phi_k$ with $\phi_1\COMPBLT\phi_2$,
we define $\phi_1\COMPT\phi_2$
as the same as $\phi_1\COMP\phi_2$ except that
the Equation~(\ref{eq:comp3}) is replaced with the following
(\ref{eq:compT}):
\begin{alignat}{2}
  \L_i \equiv_{\phi_1\COMPT\phi_2} \R_j
  :\Leftrightarrow{} & 
  (\exists l\in [k] :{} \L_i \equiv_{\phi_1} \R_l \mathrel{\land}
  \L_l \equiv_{\phi_2} \R_j) \nonumber\\
  &\quad\:\lor{}
  (\L_i\equiv_{\phi_1}\Ltop \mathrel{\land} \Ltop\equiv_{\phi_2}\R_j)
  \nonumber\\
  &\text{for } i\in[k]\cup\{\Top\},\ j\in [k].
  \label{eq:compT}
\end{alignat}
By definition, $\COMP$ and $\COMPT$ are associative.

Let $\EQj{\phi}{j}$ for $\phi\in\Phi_k$ and $j\in[k]$ be
the equivalence relation defined as follows:
$\forall i,l\in[k]:
\bigl(\L_i\equiv_{\EQj{\phi}{j}}\L_l$ iff $\R_i\equiv_{\phi}\R_l\bigr)
\land
\bigl(\L_i\equiv_{\EQj{\phi}{j}}\Ltop$ iff $\R_i\equiv_{\phi}\R_j\bigr)
\allowbreak
\land
(\L_i\equiv_{\EQj{\phi}{j}}\R_i)$.
%
The intention of the above definition is to guarantee that
$(\theta',\theta'(j),\theta')\models\EQj{\phi}{j}$
whenever $(\theta,d,\theta')\models\phi$.

\subsection{Construction of PDS simulating RPDS}
\label{sec:RPDStoPDS}

For a $k$-RPDS $\calP=(P,\Delta)$,
we construct a PDS $\calP'=(P',\Delta')$
bisimulation equivalent to $\calP$.
The set of states of $\calP'$ is
$P'=P\times\Phi_k$,
and the stack alphabet of $\calP'$ is $\Phi_k$.
%
$\calP'$ must simulate $\calP$ without keeping data values in the stack.
When popping off the stack top,
$\calP'$ must know whether or not the data value
in the new stack top of $\calP$ equals the \emph{current}
value of each register.
For this purpose, $\calP'$ keeps an abstract ``history''
of the register assignments
represented by a sequence of equivalence relations
in the stack,
which tells whether each of the data values
in the stack of $\calP$ equals the current value of each register.
The second component of each state of $\calP'$
is ``the last element'' of the abstract history,
which represents the accumulated updates
since the current stack top has been pushed into the stack.
Because a PDS cannot replace the symbol at the new stack top
when $\pop$ or $\Skip$ is performed,
$\calP'$ keeps the last element of the history in its finite state
and updates it in every transition.
%

For example, configuration
$(q,\theta_2,(d_1,\theta_1)(d_0,\theta_0))$
of $\calP$ is simulated by configuration
$((q,\phi_2),\phi_1\phi_0)$ of $\calP'$,
where $(\theta_0,d_0,\theta_1)\models\phi_1$ and
$(\theta_1,d_1,\theta_2)\models\phi_2$.
Equivalence relation $\phi_1$ abstractly represents
the updates of assignments between
when $d_0$ has been pushed and when $d_1$ has been pushed.
Similarly,
$\phi_2$ represents
the updates of assignments since
$d_1$ has been pushed.

The set $\Delta'$ of transition rules of $\calP'$ is
the smallest set satisfying
the following inference rules.
\begin{align*}
&
\begin{array}{l}
  (q,\phi_3)\to(q',\Skip)\in\Delta \quad
  \phi_1\COMPBL\phi_2 \quad
  \phi_2\COMPBLT\phi_3
  \\ \hline
  ((q,\phi_2),\phi_1) \to ((q',\phi_2\COMPT\phi_3),\Skip) \in\Delta'
\end{array}
\\[\medskipamount]
&
\begin{array}{l}
  (q,\phi_3)\to(q',\pop)\in\Delta \quad
  \phi_1\COMPBL\phi_2 \quad
  \phi_2\COMPBLT\phi_3
  \\ \hline
  ((q,\phi_2),\phi_1)\to((q',\phi_1\COMP(\phi_2\COMPT\phi_3)),\pop)
  \in\Delta'
\end{array}
\\[\medskipamount]
&
\begin{array}{l}
  (q,\phi_3)\to(q',\push(j))\in\Delta \quad
  \phi_1\COMPBL\phi_2 \quad
  \phi_2\COMPBLT\phi_3
  \\ \hline
  ((q,\phi_2),\phi_1)\to((q',\EQj{\phi_3}{j}),\push(\phi_2\COMPT\phi_3))
  \in\Delta'
\end{array}
\end{align*}
In the above inference rules, $\phi_2\COMPT\phi_3$
represents the accumulation of the update of registers
by $\phi_3$ into $\phi_2$.
When $\pop$ is performed,
$\phi_1$ at the stack top is composed into
$\phi_2\COMPT\phi_3$,
which then represents the accumulated updates
since the new stack top has been pushed into the stack.
When $\push$ is performed,
$\phi_2\COMPT\phi_3$ is pushed into the stack because
the current assignment is ``the assignment when the stack top was pushed
into the stack.''
In this case, $\calP'$ sets the second component of the state
to $\EQj{\phi_3}{j}$,
which represents the current assignment
(which is the result of the update by $\phi_3$)
equals the assignment saved in the stack top.

In the same way,
we construct an NFA $\calA'_A$, as a PDS with pop rules only,
from $\calA_A=(Q_A,P,\allowbreak\xi_A,\delta_A)$ for each $A\in\At$.
Moreover, let the set of initial states and the set of final states
of $\calA'_A$ be
$P\times\Phi_k$ and
$\{ (p,\phi) \mid (p,\lat(\phi))\in\xi_A \}$,
respectively, for each $A\in\At$.
Let $\ID_{\!\calA}=\bigcup_{A\in\At} \ID_{\!\calA_A}$
and $\ID_{\!\calA'}=\bigcup_{A\in\At} \ID_{\!\calA'_A}$.

\medskip

We assume that the stack $u_0$ in
the given starting ID
$c_0=(p_0,\theta'_0,u_0)$ is not empty,
and let $(d_0,\theta_0)$ be the last (bottom-most) element of $u_0$.
Note that the last element of the stack of
\emph{every} ID reachable from $c_0$ equals $(d_0,\theta_0)$,
because an ID with empty stack has no successor
and thus any computation cannot alter the stack bottom.

Let $R\subseteq \ID_{\!\calA}\times\ID_{\!\calA'}$
be the smallest relation satisfying the following:
$((p,\theta_n,u),\allowbreak((p,\phi_n),v))\in R$
for $u=(d_{n-1},\theta_{n-1})\ldots(d_1,\theta_1)$ and
$v=\phi_{n-1}\ldots\phi_1$
iff
  $(p,\theta_n,u)$ is proper, 
  $(d_1,\theta_1)=(d_0,\theta_0)$ (or $u=\varepsilon$), and
  $\forall i\in[n]: (\theta_{i-1},d_{i-1},\theta_i)\models\phi_i$.
(Remember that $(d_0,\theta_0)$ is the stack bottom in
the starting ID~$c_0$.)

By definition,
$R$ is functional; that is,
each $(p,\theta_n,u)\in\IDA$ has
exactly one $((p,\phi_n),v)\in\ID_{\!\calA'}$
that satisfies
$((p,\theta_n,u),((p,\phi_n),v))\in R$.
Let $R(c)$ for $c\in\IDA$
denote the unique ID $c'\in\ID_{\!\calA'}$
that satisfies $(c,c')\in R$.

Let
$R_{\calP}=R\cap(\ID_{\calP}\times\ID_{\calP'})$
and
$R_{\calA_A}=R\cap(\ID_{\!\calA_A}\times\ID_{\!\calA'_A})$
for each $A\in\At$.
%
We have proved the following proposition using
the Coq proof assistant~\cite{BC04}.\footnote{%
The proof scripts are available at
\url{https://github.com/ytakata69/rpds-to-pds-proof}.}
\begin{proposition}\label{prop:bisimP}
$R_{\calP}$ is a bisimulation relation between
$\calP$ and $\calP'$.
$R_{\calA_A}$ is a bisimulation relation between
$\calA_A$ and $\calA'_A$
for each $A\in\At$.
\end{proposition}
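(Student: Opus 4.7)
The plan is to verify both bisimulation clauses for $R_{\calP}$ and then lift the argument to $R_{\calA_A}$, adding only the extra check for condition~(3). In each clause I would do a case analysis on the command $\com\in\{\Skip,\pop,\push(j)\}$ carried by the applied rule. Throughout, I rely on two structural invariants built into the construction: properness of reachable RPDS IDs (which propagates from the assumed properness of $c_0$ by virtue of the freshness property), and the identification of the state component $\phi_n$ of $\calP'$ with the accumulated update of assignments since the current stack top was pushed, namely $(\theta_{n-1},d_{n-1},\theta_n)\models\phi_n$.

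For the forward direction (condition~(1)), fix $(c_1,c'_1)\in R_{\calP}$ and a step $c_1\done_{\calP}c_2$ via a rule $(p,\phi_3)\to(q,\com)$ with witness $(\theta_n,d_{n-1},\theta')\models\phi_3$. I would take the PDS rule in $\Delta'$ generated from this RPDS rule by setting $\phi_1:=\phi_{n-1}$ and $\phi_2:=\phi_n$; the composability premises $\phi_{n-1}\COMPBL\phi_n$ and $\phi_n\COMPBLT\phi_3$ follow immediately from $(\theta_{n-2},d_{n-2},\theta_{n-1})\models\phi_{n-1}$, $(\theta_{n-1},d_{n-1},\theta_n)\models\phi_n$, and $(\theta_n,d_{n-1},\theta')\models\phi_3$, the last two sharing both $d_{n-1}$ and $\theta_n$. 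Applying this PDS rule yields the intended $c'_2$, and it remains to re-verify the defining equalities of $R_{\calP}$. For $\Skip$ this reduces to $(\theta_{n-1},d_{n-1},\theta')\models\phi_n\COMPT\phi_3$, which unpacks the definition of $\COMPT$ on the two triples above. For $\push(j)$ one additionally needs $(\theta',\theta'(j),\theta')\models\EQj{\phi_3}{j}$, which is exactly the design of $\EQj{\phi_3}{j}$. Properness of $c_2$ follows from properness of $c_1$ together with the $\frsp$ predicate carried by the RPDS step.

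The main obstacle will be the $\pop$ case, which demands $(\theta_{n-2},d_{n-2},\theta')\models\phi_{n-1}\COMP(\phi_n\COMPT\phi_3)$: the delicate cross-class identifications of the form $\L_i\equiv\R_j$ in the composite relation only match the actual equalities between $\theta_{n-2}$ and $\theta'$ because properness of $c_1$ rules out the ambiguity described in the text following equation~(\ref{eq:comp3}); specifically, properness forces any value in $\theta_{n-2}$ not present in $\theta_{n-1}$ to also be absent from the freshly chosen values in $\theta'$, so that the two-step composition through $\theta_{n-1}$ faithfully captures the direct comparison. Associativity of $\COMP$ and $\COMPT$ then tidies the remaining bookkeeping.

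For the backward direction (condition~(2)), each PDS rule is built from an RPDS rule together with a choice of $\phi_1,\phi_2$, so I would invert the construction and exhibit a matching $\theta'$ satisfying $(\theta_n,d_{n-1},\theta')\models\phi_3$ and the freshness side-condition. Because $\phi_3$ only prescribes an equivalence pattern, $\theta'$ may be built class-by-class: each $\R$-class of $\phi_3$ joined to some $\L_i$ or $\Ltop$ inherits the corresponding value from $\theta_n\cup\{d_{n-1}\}$, while the remaining $\R$-classes receive pairwise distinct values from $D$ not appearing in the saved assignments of the stack; infiniteness of $D$ guarantees such fresh choices exist. The remainder is symmetric to the forward direction. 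Finally, condition~(3) for $R_{\calA_A}$ is handled by observing that, for an ID $(p,\theta_n,\varepsilon)$ with image $((p,\phi_n),\varepsilon)$, the constraint $(\theta_{n-1},d_{n-1},\theta_n)\models\phi_n$ gives $\theta_n\models\lat(\phi_n)$; by uniqueness of the $\Phi'_k$-type of an assignment, $(p,\theta_n,\varepsilon)\in\Acc_{\calA_A}$ iff $(p,\lat(\phi_n))\in\xi_A$, which by definition of the final states of $\calA'_A$ is exactly $(p,\phi_n)\in F$.
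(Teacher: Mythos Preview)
Your sketch is sound and, as far as one can tell, tracks the argument that the Coq development must formalize; note, however, that the paper itself does not give a written proof of this proposition at all---it simply states that the result has been verified in Coq and points to the proof scripts. So there is no paper proof to compare against in the usual sense: you have supplied the human-readable argument that the paper omits.

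One small sharpening: the freshness argument you isolate as ``the main obstacle'' in the $\pop$ case is in fact already needed (in a lighter form) for $\Skip$ and $\push$. Verifying $(\theta_{n-1},d_{n-1},\theta')\models\phi_n\COMPT\phi_3$ is not a pure definitional unpack: one must rule out $\theta_{n-1}(i)=\theta'(j)$ with this common value absent from $\theta_n\cup\{d_{n-1}\}$, and it is exactly the $\frsp$ side-condition on the RPDS step that excludes this. The $\pop$ case is harder only because it reaches back one level further and therefore also needs the properness clause (once a value leaves the register window it never returns), but the basic mechanism is uniform across all three commands. With that adjustment your case analysis goes through, and your treatment of condition~(3) via $\lat$ is correct.
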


The reduction is completed by
letting $c'_0=R(c_0)$.

\begin{example}
  Let $\calP'$ be the PDS obtained from the 2-RPDS $\calP$ shown in
  Example~\ref{example:RPDS}.
  Let $\phi_0,\ldots,\phi_3\in\Phi_2$ be
  the equivalence relations also shown in Example~\ref{example:RPDS}.
  By the above inference rules,
  $\calP'$ has rules
  $((p_1,\phi''),\phi')\to
  ((p_1,\EQj{\phi_1}{1}),\push(\phi''\COMPT\phi_1))$
  (obtained from $r_2$)
  and
  $((p_1,\phi''),\phi')\to
  ((p_1,\phi'\COMP(\phi''\COMPT\phi_1)),\pop)$
  (obtained from $r_3$)
  for each $\phi'$ and $\phi''$ such that
  $\phi'\COMPBL\phi''$ and $\phi''\COMPBLT\phi_1$.
  Let $\phi_5,\phi_6\in\Phi_2$ be the equivalence relations
  defined by the following quotient sets:
  \begin{align*}
    X_2/{\phi_5} &= \{ \{\L_1,\R_1,\Ltop\}, \{\L_2,\R_2\} \}, \\
    X_2/{\phi_6} &= \{ \{\L_1,\R_1\}, \{\L_2,\R_2,\Ltop\} \}.
  \end{align*}
  Because $\phi_0\COMPBL\phi_5$,
  $\phi_5\COMPBLT\phi_1$,
  $\phi_5\COMPT\phi_1=\phi_1$, and
  $\EQj{\phi_1}{1}=\phi_5$,
  $\calP'$ has rule
  $r'_2=((p_1,\phi_5),\phi_0)\to
  ((p_1,\phi_5),\push(\phi_1))$.
  Similarly, because $\phi_1\COMPBL\phi_5$ and
  $\phi_1\COMP\phi_1=\phi_1$,
  $\calP'$ has rule
  $r'_3=((p_1,\phi_5),\phi_1)\to
  ((p_1,\phi_1),\pop)$.
  By these two rules, $\calP'$ has
  the following transition sequence:
  \begin{align}
    ((p_1,\phi_5),\phi_0\phi_6)
    &\done_{\calP'}
    ((p_1,\phi_5),\phi_1\phi_0\phi_6) \nonumber \\
    &\done_{\calP'}
    ((p_1,\phi_1),\phi_0\phi_6).
    \label{eq:PDStrans}
  \end{align}
  The following is a part of the transition sequence
  of $\calP$ in Figure~\ref{fig:RPDS}, in which
  each stack cell is augmented by the assignment at the time
  when the cell has been pushed. (See Section~\ref{sec:freshness}.)
  \begin{align}
    &(p_1,[d_2,d_0],(d_2,[d_2,d_0])(d_0,[d_1,d_0])) \nonumber \\
    &\done_{\calP}
    (p_1,[d_3,d_0],(d_3,[d_3,d_0])(d_2,[d_2,d_0])(d_0,[d_1,d_0]))
    \nonumber \\
    &\done_{\calP}
    (p_1,[d_4,d_0],(d_2,[d_2,d_0])(d_0,[d_1,d_0])).
    \label{eq:RPDStrans}
  \end{align}
  Let $c_1,c_2,c_3$ be the IDs in the sequence~(\ref{eq:RPDStrans}),
  respectively.
  We can see that the IDs 
  in the sequence~(\ref{eq:PDStrans}) are
  $R(c_1)$, $R(c_2)$, and $R(c_3)$, respectively.
  (See the paragraphs before Proposition~\ref{prop:bisimP}
   for the definition of $R$.)
\end{example}

\subsection{Time complexity}

Consider the LTL model checking problem on 
$\calP=(P,\Delta)$, 
$\varphi$, 
$\Lambda$, and 
$c_0$,
where $\Lambda$ is a regular valuation and
is represented by $k$-RA $\calA_A=(Q_A,P,\xi_A,\delta_A)$
for $A\in\At$.
Let $\calP'=(P',\Delta')$ and $\calA'_A=(Q'_A,P',F_A,\delta'_A)$
be the PDS and NFA obtained from $\calP$ and $\calA_A$
in the last subsection.
Applying the LTL model checking method for PDS
in~\cite{EKS03} to $\calP'$,
we can solve the model checking problem on
$\calP, \varphi, \Lambda, c_0$
in $O(|P'|^2\cdot|\Delta'|\cdot
\prod_{A\in\At}|Q'_A|
\cdot 2^{O(|\varphi|)})$
time,
if $\calA'_A$ for each $A\in\At$ is backward-deterministic.\footnote{%
  In \cite{EKS03}, each FA for a regular valuation
  is defined as the one that reads the \emph{reverse} of the stack contents,
  and thus a deterministic FA in \cite{EKS03} is
  a backward-deterministic FA in our settings.}
By the construction,
$|P'|=|P|\cdot|\Phi_k|$ and
$|Q'_A|=|Q_A|\cdot|\Phi_k|$.
Moreover,
$|\Delta'|\le |\Delta|\cdot|\Phi_k|^2$,
because
in the inference rules defining $\Delta'$,
we choose two equivalence relations $\phi_1$ and $\phi_2$
for each transition rule in $\Delta$.
$|\Phi_k|$ equals the $(2k+1)$-th Bell number and thus
$|\Phi_k|=2^{O(k\log k)}$.
We can assume that $|\At|\le |\varphi|$ and thus
$\prod_{A\in\At}|Q'_A| \le \prod_{A\in\At}|Q_A|\cdot|\Phi_k|^{|\varphi|}$.
Therefore the time complexity of the proposed method is
exponential in $k$ and $|\varphi|$ and polynomial in
$|P|$ and $|\Delta|$ and $\prod_{A\in\At}|Q_A|$.
Note that if the NFA $\calA'_A$ for some $A\in\At$
is not backward-deterministic,
we have to apply backward-determinization to $\calA'_A$,
which may increase the time complexity of the proposed method.
Also note that
$\calA'_A$ is not necessarily backward-deterministic
even when $\calA_A$ is backward-deterministic in the sense
defined in~\cite{STS21-IEICE-dec}.

\section{Conclusion}
\label{sec:conclusion}

This paper proposed a method for
solving the LTL model checking problem for RPDS
with regular valuations,
in which the problem is reduced to the same problem for PDS\@.
In contrast to the method for the same problem
proposed in~\cite{STS21-IEICE-dec},
the method in this paper does not require
RA used for a regular valuation to be
deterministic or backward-deterministic.
On the other hand, the method in this paper requires
every RPDS and RA has the freshness property instead.
This paper also showed that
the LTL model checking problem for RPDS with regular valuations
defined by general RA is undecidable,
and thus the freshness constraint is essential in this method.

\bibliography{rpds}

\begin{thebibliography}{10}
\expandafter\ifx\csname url\endcsname\relax
  \def\url#1{\texttt{#1}}\fi
\expandafter\ifx\csname urlprefix\endcsname\relax\def\urlprefix{URL }\fi
\expandafter\ifx\csname href\endcsname\relax
  \def\href#1#2{#2} \def\path#1{#1}\fi

\bibitem{ABE18}
R.~Alur, A.~Bouajjani, J.~Esparza, Model checking procedural programs, in:
  Handbook of Model Checking, Springer, 2018, Ch.~17, pp. 541--572.
\newblock \href {https://doi.org/10.1007/978-3-319-10575-8_17}
  {\path{doi:10.1007/978-3-319-10575-8_17}}.

\bibitem{BEM97}
A.~Bouajjani, J.~Esparza, O.~Maler, Reachability analysis of pushdown automata:
  Application to model-checking, in: 8th Int. Conf. on Concurrency Theory,
  {CONCUR} '97, Vol. 1243 of Lecture Notes in Computer Science, Springer, 1997,
  pp. 135--150.
\newblock \href {https://doi.org/10.1007/3-540-63141-0_10}
  {\path{doi:10.1007/3-540-63141-0_10}}.

\bibitem{Wal96}
I.~Walukiewicz, Pushdown processes: Games and model checking, in: Computer
  Aided Verification, 8th Int. Conf., {CAV} '96, Vol. 1102 of Lecture Notes in
  Computer Science, Springer, 1996, pp. 62--74.
\newblock \href {https://doi.org/10.1007/3-540-61474-5_58}
  {\path{doi:10.1007/3-540-61474-5_58}}.

\bibitem{EKS03}
J.~Esparza, A.~Ku{\v{c}}era, S.~Schwoon, Model checking {LTL} with regular
  valuations for pushdown systems, Inf. Comput. 186~(2) (2003) 355--376.
\newblock \href {https://doi.org/10.1016/S0890-5401(03)00139-1}
  {\path{doi:10.1016/S0890-5401(03)00139-1}}.

\bibitem{KF94}
M.~Kaminski, N.~Francez, Finite-memory automata, Theor. Comput. Sci. 134~(2)
  (1994) 329--363.
\newblock \href {https://doi.org/10.1016/0304-3975(94)90242-9}
  {\path{doi:10.1016/0304-3975(94)90242-9}}.

\bibitem{LV12}
L.~Libkin, D.~Vrgo{\v{c}}, Regular path queries on graphs with data, in: 15th
  Int. Conf. on Database Theory, {ICDT} '12, {ACM}, 2012, pp. 74--85.
\newblock \href {https://doi.org/10.1145/2274576.2274585}
  {\path{doi:10.1145/2274576.2274585}}.

\bibitem{GDPT13}
R.~Grigore, D.~Distefano, R.~L. Petersen, N.~Tzevelekos, Runtime verification
  based on register automata, in: {TACAS} 2013, Vol. 7795 of Lecture Notes in
  Computer Science, Springer, 2013, pp. 260--276.
\newblock \href {https://doi.org/10.1007/978-3-642-36742-7_19}
  {\path{doi:10.1007/978-3-642-36742-7_19}}.

\bibitem{EFR21}
L.~Exibard, E.~Filiot, P.~Reynier,
  \href{https://lmcs.episciences.org/7279}{Synthesis of data word transducers},
  Log. Methods Comput. Sci. 17~(1) (2021) \unskip\null.
\newline\urlprefix\url{https://lmcs.episciences.org/7279}

\bibitem{MRT17}
A.~S. Murawski, S.~J. Ramsay, N.~Tzevelekos, Reachability in pushdown register
  automata, J. Comput. Syst. Sci. 87 (2017) 58--83.
\newblock \href {https://doi.org/10.1016/j.jcss.2017.02.008}
  {\path{doi:10.1016/j.jcss.2017.02.008}}.

\bibitem{STS21}
R.~Senda, Y.~Takata, H.~Seki, Forward regularity preservation property of
  register pushdown systems, {IEICE} Trans. Inf. Syst. 104-D~(3) (2021)
  370--380.
\newblock \href {https://doi.org/10.1587/transinf.2020FCP0008}
  {\path{doi:10.1587/transinf.2020FCP0008}}.

\bibitem{STS21-IEICE-dec}
R.~Senda, Y.~Takata, H.~Seki, {LTL} model checking for register pushdown
  systems, {IEICE} Trans. Inf. Syst. 104-D~(12) (2021) 2131--2144.
\newblock \href {https://doi.org/10.1587/transinf.2020EDP7265}
  {\path{doi:10.1587/transinf.2020EDP7265}}.

\bibitem{CGKPV18-Bisim}
E.~M. Clarke, O.~Grumberg, D.~Kroening, D.~Peled, H.~Veith, Model checking, 2nd
  Edition, {MIT} Press, 2018, Ch.~11, pp. 177--182.

\bibitem{STS21-ICTAC}
R.~Senda, Y.~Takata, H.~Seki, Reactive synthesis from visibly register pushdown
  automata, in: {ICTAC} 2021, Vol. 12819 of Lecture Notes in Computer Science,
  Springer, 2021, pp. 334--353.
\newblock \href {https://doi.org/10.1007/978-3-030-85315-0_19}
  {\path{doi:10.1007/978-3-030-85315-0_19}}.

\bibitem{BC04}
Y.~Bertot, P.~Cast{\'{e}}ran, Interactive Theorem Proving and Program
  Development, Coq'Art: The Calculus of Inductive Constructions, Texts in
  Theoretical Computer Science. An {EATCS} Series, Springer, 2004.
\newblock \href {https://doi.org/10.1007/978-3-662-07964-5}
  {\path{doi:10.1007/978-3-662-07964-5}}.

\bibitem{NSV04}
F.~Neven, T.~Schwentick, V.~Vianu, Finite state machines for strings over
  infinite alphabets, {ACM} Trans. Comput. Log. 5~(3) (2004) 403--435.
\newblock \href {https://doi.org/10.1145/1013560.1013562}
  {\path{doi:10.1145/1013560.1013562}}.

\bibitem{BHLM14}
B.~Bollig, P.~Habermehl, M.~Leucker, B.~Monmege, A robust class of data
  languages and an application to learning, Log. Methods Comput. Sci. 10~(4)
  (2014) 1--23.
\newblock \href {https://doi.org/10.2168/LMCS-10(4:19)2014}
  {\path{doi:10.2168/LMCS-10(4:19)2014}}.

\bibitem{Tze11}
N.~Tzevelekos, Fresh-register automata, in: POPL '11, {ACM}, 2011, pp.
  295--306.
\newblock \href {https://doi.org/10.1145/1926385.1926420}
  {\path{doi:10.1145/1926385.1926420}}.

\end{thebibliography}

\end{document}